\newtheorem{vor}{Assumption}[section]
\newtheorem{theorem}[vor]{Theorem}
\newtheorem{lemma}[vor]{Lemma}
\theoremstyle{definition}
\numberwithin{equation}{section}
\renewcommand{\rm}{\textnormal}
\newcommand{\mb}{\mathbb}
\newcommand{\mf}{\mathfrak}
\newcommand{\mc}{\mathcal}
\newcommand{\Ls}{\big\langle}
\newcommand{\Rs}{\big\rangle}
\newcommand{\ls}{\langle}
\newcommand{\rs}{\rangle}
\renewcommand{\d}{\textnormal{d}}
\renewcommand{\l}{\ell}
\newcommand{\el}{\ell}
\begin{document}

\title{\huge{Exponential decay of the number of excitations in\\ the weakly interacting Bose gas}
}

\author{David Mitrouskas\thanks{Institute of Science and Technology Austria (ISTA), Am Campus 1, 3400 Klosterneuburg, Austria. \texttt{mitrouskas@ist.ac.at}}
\phantom{i} and Peter Pickl\thanks{Universität Tübingen, Fachbereich Mathematik, Auf der Morgenstelle 10, 72076 Tübingen, Germany. \texttt{p.pickl@uni-tuebingen.de}}
}

\date{\today}
\maketitle

\frenchspacing

\begin{spacing}{1.1}

\begin{abstract} 
We consider $N$ trapped bosons in the mean-field limit with coupling constant $\lambda_N=1 / (N-1)$. The ground state of such systems exhibits Bose--Einstein condensation. We prove that the probability of finding $\ell$ particles outside the condensate wave function decays exponentially in $\ell $.
\end{abstract}

\section{Introduction and main result}

We consider $N$ bosons described by the Hamiltonian
\begin{align}
H_N = \sum_{i=1}^N \Big( -\Delta_i + V^{\rm{ext}}(x_i) \Big) +\frac{1}{N-1} \sum_{1\le i < j \le N } v(x_i-x_j)
\end{align}
acting on the $N$-particle Hilbert space $\mathfrak H^N = \bigotimes^N_{\textnormal{sym}} L^2(\mathbb R^3)$. We assume that the external potential $V^{\rm{ext}} :\mb R^3 \to \mb R$ is measurable, locally bounded and acts as a confining potential, i.e.,  $V^{\textnormal{ext}}(x) \to \infty$ as $|x|\to \infty$. For the pair potential, we consider two cases: (i) $v$ is an even, pointwise bounded function with non-negative Fourier transform or (ii) $v(x) = \lambda |x|^{-1}$ with $\lambda >0$ the repulsive Coulomb potential. Under these conditions, $H_N$ is essentially self-adjoint and has a unique ground state, which we denote by $\Psi_N$. It is well known \cite{LNR,LNSS,GrechS,Seiringer} that the ground state exhibits complete Bose--Einstein condensation (BEC) in the minimizer $\varphi \in L^2(\mathbb R^3)$ of the Hartree energy functional $u \mapsto  N^{-1} \langle u^{\otimes N}, H_N u^{\otimes N}\rangle$ (for BEC in dilute Bose gases, see \cite{NT,BSS,BBCS20,BBCS19,LSSY,LS02,NNR,NRS}). Complete BEC means that in the limit $N\to \infty$, all but finitely many particles occupy the same one-particle state $\varphi \in L^2(\mb R^3)$. To make this statement more precise, let $p = |\varphi \rangle\langle \varphi |$ and consider the family of operators  $ 
\mf P_{N}(\ell ) : \mathfrak H^N \to \mathfrak H^N $ with $\ell \in \{0,\ldots, N\}$
\begin{align}
\mf P_N(\ell) = \Big(   ( 1 -p ) ^{\otimes \ell} \otimes p ^{\otimes N-\ell} \Big)_{\rm{sym}}.
\end{align}
It is straightforward to verify that 
\begin{align}
\mf P_N(\ell) \mf P_N(k) = \delta_{\ell k}\quad \text{and}\quad \mathds 1 = \sum_{\ell=0}^N \mf P_N(\ell).
\end{align}
The operator $\mf P_N(\ell)$ projects onto states that contain exactly $N-\ell$ particles in the condensate wave function $\varphi$ and $\ell$ particles in the orthogonal complement $\{ \varphi \}^\perp \subseteq L^2(\mb R^3)$.  The number $P_N(\ell ) : = \| \mf P_{N}(\ell ) \Psi_N \|^2 $ is thus the probability of finding $\ell $ particles in the ground state $\Psi_N$ that do not occupy the condensate wave function $\varphi$. Complete BEC can now be formulated as $P_N(0) = 1 +  O(N^{-1})$ as $N\to \infty$. For finite values of $N$, however, there is in general a non-vanishing probability $1-P_N(0) = \sum_{\ell=1}^N P_N(\ell) >0$ of finding particles outside the condensate. This work aims to establish strong asymptotic bounds on $P_N(\ell)$ for large $\ell$ and $N$. Our main theorem states that $P_N(\ell)$ decays exponentially in $\ell$.

\begin{theorem} \label{theorem}There exists an $\varepsilon>0$ such that for every  $ f :\mb N  \to \mb N$ with $f(n) \xrightarrow{n \to \infty} \infty$ 
\begin{align}\label{eq:main:bound}
\lim_{N\to \infty} P_N(f(N)) \, \exp( \varepsilon f(N) )  \ = \ 0. 
\end{align}
\end{theorem}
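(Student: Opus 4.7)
The plan is to establish the stronger uniform moment bound $\sup_N \langle \Psi_N, e^{\varepsilon_0 \widehat n}\Psi_N\rangle \le C$ for some $\varepsilon_0 > 0$, where $\widehat n := \sum_{\ell=0}^N \ell\, \mf P_N(\ell)$; Markov's inequality then gives $P_N(\ell) \le C e^{-\varepsilon_0 \ell}$, from which \eqref{eq:main:bound} follows for any $\varepsilon < \varepsilon_0$. The argument takes the form of a Gronwall-type inequality in $\varepsilon$ for the moment generating function $F_N(\varepsilon) := \langle \Psi_N, e^{\varepsilon \widehat n}\Psi_N\rangle$.

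The first ingredient is the coercivity bound $H_N - E_N \ge c \widehat n - o_N(1)$, which follows from a Bogoliubov expansion in the excitation Fock space. Concretely, the Lewin--Nam--Serfaty--Solovej unitary $U_N$ maps $\Psi_N$ into $\mc F_+^{\le N}$ and identifies $\widehat n$ with the Fock-space number operator, while $U_N (H_N - N e_{\rm H}) U_N^*$ equals the Bogoliubov Hamiltonian $\mb H_{\rm Bog}$ plus cubic/quartic remainders of order $N^{-1/2}$. Under the assumed positivity of the Hessian of the Hartree functional, $\mb H_{\rm Bog}$ has a strictly positive spectral gap and satisfies $\mb H_{\rm Bog} \ge c \mc N - C$.

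The heart of the argument is the identity, obtained via $(H_N - E_N)\Psi_N = 0$,
\begin{align*}
\langle e^{\varepsilon \widehat n/2}\Psi_N, (H_N - E_N)\, e^{\varepsilon \widehat n/2}\Psi_N\rangle = \langle \Psi_N, [e^{\varepsilon \widehat n/2}, H_N]\, e^{\varepsilon \widehat n/2}\Psi_N\rangle.
\end{align*}
The coercivity bounds the left-hand side below by $c F_N'(\varepsilon) - o_N(1)\, F_N(\varepsilon)$. For the right-hand side one decomposes $H_N = \sum_{d \in \{0, \pm 1, \pm 2\}} H_N^{(d)}$ according to how $H_N^{(d)}$ shifts $\widehat n$; the ladder structure then gives $[e^{\varepsilon \widehat n/2}, H_N^{(d)}] = (e^{\varepsilon d/2} - 1)\, e^{\varepsilon \widehat n/2}\, H_N^{(d)}$. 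The prefactors $(e^{\varepsilon d/2} - 1) = O(\varepsilon)$ combined with Cauchy--Schwarz and operator bounds $\|H_N^{(d)}\phi\| \lesssim \|(\widehat n + 1)\phi\|$ should yield a closed differential inequality $F_N'(\varepsilon) \le C_\varepsilon F_N(\varepsilon) + o_N(1)$ for $\varepsilon$ small, and integrating on $[0, \varepsilon_0]$ with $F_N(0) = 1$ produces the desired uniform bound $F_N(\varepsilon_0) \le C$.

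The main obstacle is the commutator estimate on the right-hand side: the off-diagonal pieces $H_N^{(\pm 1)}, H_N^{(\pm 2)}$ contain cubic and quartic creation-annihilation terms that a priori couple $F_N(\varepsilon)$ to its higher derivatives $F_N^{(k)}(\varepsilon) = \langle \Psi_N, \widehat n^k e^{\varepsilon \widehat n}\Psi_N\rangle$. Closing the inequality at the level of $F_N$ alone --- and thereby obtaining a quantitative value for $\varepsilon_0$ --- requires exploiting both the $O(\varepsilon)$ smallness of the commutator prefactors (which restricts the admissible $\varepsilon_0$) and the coercivity gain $c F_N'(\varepsilon)$ on the left to absorb these higher-derivative contributions. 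A preliminary inductive polynomial-moment bound $\langle \Psi_N, \widehat n^k\Psi_N\rangle \le C_k$ with controlled growth in $k$ is likely needed as scaffolding for the Gronwall argument.
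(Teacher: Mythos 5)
Your route --- proving a uniform exponential-moment bound $\sup_N\langle\Psi_N,e^{\varepsilon_0\widehat n}\Psi_N\rangle\le C$ via a Gronwall inequality in $\varepsilon$ --- is genuinely different from the paper's argument, which never touches the moment generating function: the paper derives a discrete difference inequality $\sigma F_L(\ell)\le F_L(\ell+L)+F_L(\ell-L)$ for $F_L(\ell)=\sum_{k=\ell-L}^{\ell+L}k\|\chi^{(k)}\|^2$ (Lemma \ref{lem:FL:inequality}) and exploits convexity plus normalization to exclude a minimum, with a separate energy argument for $\ell\ge \kappa' N/2$. Your strategy is essentially that of Nam--Rademacher \cite{NR23} and could in principle be implemented in the mean-field setting, but as written it has a genuine gap at exactly the step you flag: the differential inequality is never closed, and the tools you propose would not close it. Using the operator bounds $\|H_N^{(d)}\phi\|\lesssim\|(\widehat n+1)\phi\|$ together with Cauchy--Schwarz controls the commutator term only by $\varepsilon\,\|e^{\varepsilon\widehat n/2}\Psi_N\|\,\|(\widehat n+1)e^{\varepsilon\widehat n/2}\Psi_N\|$, i.e.\ by second-derivative quantities of type $F_N''(\varepsilon)$, which cannot be absorbed by the gain $cF_N'(\varepsilon)$ uniformly in $N$ (on sectors with $\widehat n\sim N$ one loses a full factor $\widehat n$). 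The missing idea is to distribute the weight symmetrically and use \emph{form} bounds: since $H_N^{(d)}$ shifts $\widehat n$ by $d$, the commutator term equals $\sum_{d\neq 0}(1-e^{\varepsilon d/2})\langle\xi,H_N^{(d)}\xi\rangle$ with $\xi=e^{\varepsilon\widehat n/2}\Psi_N$, and one then needs $|\langle\xi,H_N^{(d)}\xi\rangle|\le C\langle\xi,(\widehat n+1)\xi\rangle$; for the pairing term this follows from Lemmas \ref{lem:K1:bound} and \ref{lem:K2:bound:b}, and for the cubic term from Lemma \ref{lem:bound:K3} only after exploiting the prefactor $N^{-1/2}$ and $\widehat n\le N$. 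Only with such bounds does one arrive at $(c-C\varepsilon)F_N'\le CF_N$ and the Gronwall conclusion. Your fallback --- polynomial-moment scaffolding $\langle\Psi_N,\widehat n^{\,k}\Psi_N\rangle\le C_k$ --- does not rescue the argument: the available bounds \cite{BPS2021} give $C_k\le(C(k+1))^{(k+6)^2}$, far too fast in $k$ to sum into an exponential moment, and improving them to $C_k\le C^k k!$ is equivalent to the theorem itself.

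Two further points. The coercivity $H_N-E_N\ge c\,\widehat n-C$ is asserted from ``Bogoliubov Hamiltonian plus remainders of order $N^{-1/2}$'', but the cubic and quartic remainders are not uniformly $O(N^{-1/2})$ as operators (they are small only on low-excitation sectors); in case (i) the coercivity does hold, but it should be obtained from $\hat v\ge 0$, $E_N\le Ne_{\rm H}$ and the spectral gap $qhq\ge\tau>0$ (as the paper uses via $\mb K_0\ge\tau\mc N$), and the nondegeneracy of the Hartree Hessian is not an assumption of the theorem but has to be derived. More importantly, your argument does not address case (ii), the repulsive Coulomb potential: there $v(0)=\infty$, the pairing-term bound degenerates, and one needs a regularization such as the Yukawa-cutoff argument of Lemma \ref{lem:K2:bound}. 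So the proposal outlines a viable alternative program, but the decisive estimates are missing, and the implementation you sketch would fail as stated.
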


For the homogeneous Bose gas on the torus with bounded pair potential of positive type, the theorem was stated and proved already in 2017 within the PhD thesis \cite[Theorem 3.1]{Mitrouskas2017}. In this note, we present essentially the same proof, albeit somewhat simplified and with the correction of two minor errors. The generalizations to the trapped Bose gas and the repulsive Coulomb potential require only small modifications. As shown in \cite{Mitrouskas2017}, the statement can be proved also for excited eigenstates of $H_N$ but we shall address only the ground state here. 

A related result was obtained in \cite{BPS2021}, where the authors derive, under similar assumptions on $v$, higher-moment bounds of the form $\sum_{\ell=1}^N \ell^n P_N(\ell) \le C_n$ for all $n\in \mb N$ with unspecified constants $C_n$ (if $v$ is bounded, they obtain $C_n \le (C (n+1))^{(n+6)^2}$ for some $C>0$). Theorem \ref{theorem} implies that $C_n \le C^n n!$ for some $C>0$. In \cite{BPS2021}, the higher-moment bounds are used to derive an asymptotic series for the ground state energy of $H_N$. Our improved bounds on $C_n$ could thus be relevant for the analysis of certain resummation properties, such as Borel summability, of this asymptotic series, see \cite[Remark 3.5]{BPS2021}.

Very recently, Nam and Rademacher \cite{NR23} achieved a major advancement by extending the exponential decay of $P_N(\ell)$ to dilute Bose gases. They consider the homogeneous Bose gas on the unit torus with pair potential $v(x) = N^{3\beta} v(N^\beta x)$, $\beta\in [0,1]$, for non-negative compactly supported $v\in L^3([0,1])$. This includes, in particular, the physically most relevant Gross--Pitaevskii regime ($\beta=1$). Their result shows that for every low-energy eigenfunction $\psi_N \in \mf H^N$, $ \ls \psi_N, e^{\kappa \mathcal N} \psi_N \rs = O(1)$ as $N\to \infty$ for some $\kappa>0$, where $\mathcal N = \sum_{i=1}^N (1-p_i)$ is the operator that counts the number of particles outside the condensate. Since $\ls \psi_N, e^{\kappa \mathcal N} \psi_N \rs  = \sum_{\ell =1}^N P_N(\ell) \exp(\kappa \ell)$, this proves \eqref{eq:main:bound} for the dilute Bose gas. Higher-moment bounds of the form $\ls \psi_N , \mathcal N^n \psi_N \rs \le C_n$, $n\in \mb N$, have been obtained in the Gross--Pitaevskii regime in \cite{BBCS19}.

Finally, let us mention that exponential bounds for slightly different observables than $\mathcal N$ have been recently studied also in the context of large deviations \cite{KRS,Rademacher,RademacherS} (see also \cite[Remark 1.3]{NR23}).

The idea of the proof of Theorem \ref{theorem} is to show that the function $P_N(\ell)$ satisfies an inequality of the form $  P_N(\ell+2) +  P_N(\ell-2) - 2 P_N(\ell) \ge \sigma  P_N(\ell)$ for some $\sigma >0$. To obtain such a bound, we take the scalar product of the eigenvalue equation with $\mf P_N(\ell) \Psi_N$ and utilize the observation that the two-body potential in $H_N$ acts, after subtraction of the mean-field contribution, effectively as a discrete second derivative in $\ell$. To illustrate the simplicity of the idea, we provide a more detailed sketch of the argument in Section \ref{sec:outline}.

\section{Proof}

The remainder of this note is organized as follows. In the next section, we introduce the excitation formalism \cite{LNSS,LNS}, which is convenient for our proof. In Section \ref{eq:exc:hamiltonian}, we give a heuristic discussion of the proof and in Section \ref{sec:proof}, we state the main technical lemma and use this lemma to prove our main result. In Sections \ref{sect:proof:diff:inequality} and \ref{sec:extension:singular}, we provide the proof of the technical lemma.

\subsection{Excitation Hamiltonian}\label{eq:exc:hamiltonian}

We define the Hartree energy as $e_{\rm H} : = N^{-1} \inf_u \langle u^{\otimes N} , H_N u^{\otimes N} \rangle $, where the infimum is taken over all $L^2$-normalized $u \in H^1(\mathbb R^3)$ and we denote the corresponding unique positive minimizer by $\varphi$. For a proof of existence and uniqueness of the minimizer and its positivity, see e.g. \cite[Lem 2.2]{BPS2021}. Given the Hartree minimizer $\varphi$, we introduce the unitary excitation map $ U_N(\varphi) : \mathfrak H^N \to \mathcal F^{\le N}_{\perp}: = \bigoplus_{\ell=0}^N \bigotimes_{\textnormal{sym}}^\ell \{ \varphi\}^ \perp$ acting as
\begin{align}
U_N(\varphi) \Phi_N = \bigoplus_{\ell=0}^N q^{\otimes \ell} \bigg( \frac{ a(\varphi)^{\otimes N-\ell}}{\sqrt {(N-\ell)!} } \Phi_N \bigg) , \quad \Phi_N\in \mf H^N
\end{align}
where $q=1-|\varphi \rangle \langle \varphi |$. We denote the usual bosonic creation and annihilation operators by $a^*$ and $a$ and the number operator by $\mathcal N $. 

We then define the excitation Hamiltonian $\mb H$ as an operator acting on the excitation Fock space $\mathcal F^{\le N}_{\perp}$ by
\begin{align}\label{eq:def:excitation:H}
\mb H & := U_N(\varphi) (H_N - N e_{\rm H}) U_N(\varphi )^* \notag\\[1mm]
& = \mb K_0 +\frac{1}{N-1} \Big( \mb K_1 \, \mf a (\mc N) + \big( \mb K_2 \, \mf b(\mc N) + \text{h.c.} \big)   + \big ( \mb K_3 \, \mf c(\mc N)    + \text{h.c.} \big) +   \mb K_4 \Big)
\end{align}
with $N$-dependent functions
\begin{align}
\mf a (\l) & := N-\l , \quad \mf b(\l)  :=  \sqrt{(N-\l) (N-\l -1)} , \quad \mf c(\l) := \sqrt{N - \ell}
\end{align}
and $N$-independent operators $\mb K_0 : = \text{d}\Gamma(q h q )$ with $h : L^2(\mathbb R^3) \to L^2(\mathbb R^3)$ given by
\begin{align}\label{eq:definition:T:operator}
h = -\Delta + V^{\textnormal{ext}} + v\ast \varphi^2 - \langle \varphi, ( -\Delta + V^{\textnormal{ext}} + v\ast \varphi^2 ) \varphi \rangle
\end{align}
and\allowdisplaybreaks
\begin{align}
\mb K_1 & := \int \d x_1 \d x_2\, K_1(x_1,x_2) a_{x_1}^* a_{x_2} \\
\mb K_2 & := \frac{1}{2} \int \d x_1 \d x_2 K_2(x_1,x_2) a_{x_1}^* a_{x_2}^* \\
\mb K_3 & := \int dx_1 \d x_2 \d x_3\, K_3(x_1,x_2,x_3) a_{x_1}^* a^*_{x_2} a_{x_3} \\
\mb K_4 & := \frac{1}{2} \int \d x_1 \d x_2 \d x_3 dx_4\, K_4(x_1,x_2,x_3,x_4) a_{x_1}^* a_{x_2}^* a_{x_3} a_{x_4} .
\end{align}
With $K(x,y) := \varphi(y) v(x-y) \varphi(x)$ and
\begin{align}
W(x,y) : = v(x-y)  - v * |\varphi|^2 (x) -   v * |\varphi|^2 (y)  + \ls \varphi , v * |\varphi|^2 \varphi \rs ,
\end{align}
the different kernels are given by
\begin{align}
K_1(x_1,x_2) &:= \int \d y_1\d y_2 \, q(x_1,y_1) K(y_1,y_2) q(y_2,x_2), \\
K_2(x_1,x_2) &:= \int \d y_1\d y_2 \, q(x_1,y_1) q(x_2,y_2) K(y_1,y_2), \\
K_3(x_1,x_2,x_3) &:= \int \d y_1\d y_2 \, q(x_1,y_1) q(x_2,y_2) W(y_1,y_2) \varphi(y_1) q(y_2,x_3), \\
K_4(x_1,x_2,x_3,x_4) &:= \int \d y_1\d y_2 \, q(x_1,y_1) q(x_2,y_2) W(y_1,y_2) q(y_1,x_3) q(y_2,x_4),
\end{align}
where $q(x,y)$ is the integral kernel of $q = 1 - |\varphi \rangle \langle \varphi|$. For the derivation of \eqref{eq:def:excitation:H}, we refer to \cite{BPS2021,LNSS}. Before we continue, let us note the important fact that the operator $qhq$ with $h$ defined in \eqref{eq:definition:T:operator} has a spectral gap above zero, that is, $qhq \ge \tau$ for some number $\tau>0$. Consequently, we have $\mb K_0 \ge \tau \mathcal N$ on the excitation Fock space $\mathcal F_\perp := \bigoplus_{\ell=0}^\infty \bigotimes_{\textnormal{sym}}^\ell \{ \varphi\}^\perp$.

Denoting by $\Psi_N \in \mf H^N$ the unique ground state of $H_N$ with ground state energy $E_N = \inf \sigma(H_N)$, we introduce $\chi := U_N(\varphi) \Psi_N$, which satisfies the eigenvalue equation $\mb H \chi = ( E_N - N e_{\rm H} ) \chi$. In terms of $\chi = (\chi^{(\ell)})_{\ell=0}^N$, the probability to find $\ell$ excitations outside the condensate wave function $\varphi$ is given by $P_N(\ell) = \| \chi^{(\ell)} \|^2$.

\subsection{Idea of the proof}\label{sec:outline}

To illustrate the idea of the proof, we demonstrate the argument for the ground state eigenfunction of the quadratic Bogoliubov approximation of $\mb H$. That is, we consider the eigenvalue equation $\mb H_0 \phi = E_0 \phi $ on $\mathcal F_\perp$, where
\begin{align}\label{eq:Bog:Hamiltonian}
\mb H_0 = \mb K_0 + \mb K_1 + \mb K_2 + \mb K_2^\dagger
\end{align}
and $E_0<0$ is the lowest possible eigenvalue of $\mb H_0$. Existence and uniqueness of the ground state $\phi \in \mathcal F_\perp$ follow by unitary diagonalization of $\mb H_0$ \cite{LNSS,NNS}. Similarly as for $\chi$, the number of particles in $\phi$ correspond to the number of particles in the state $U_N(\varphi)^* 1(\mathcal N \le N) \phi \in \mf H^N$ that are not in the condensate wave function. In the following, we show that $\| \phi^{(\ell)}\|^2 \le C \exp(-\varepsilon \ell)$ for some $C,\varepsilon >0$ and all $\ell\ge 0$. Note that for the purpose of this demonstration, we assume that $v$ is pointwise bounded with $\| v \| _\infty $ sufficiently small. 

We start by taking the scalar product on both sides of  the eigenvalue equation with $\phi^{(\ell)}$. Using $E_0 \le 0$ and $\mb K_0 \ge \tau \mathcal N$ with $\tau > 0$, this implies
\begin{align}
\tau \ell \| \phi^{(\ell)} \|^2 \le - \Ls \phi^{(\el)} \mb K_2 \phi^{(\el-2) }\Rs -  \Ls \phi^{(\el)} \mb K_2^\dagger \phi^{(\el+2) }\Rs -   \Ls \phi^{(\el)} \mb K_1 \phi^{(\el) }\Rs.
\end{align}
Since $\mb K_1\ge 0$ (this follows from $\hat v\ge 0$), we can apply Lemma \ref{lem:K2:bound:b} below to bound the terms involving $\mb K_2$. This leads to
\begin{align}\label{eq:outline:ineuqlaity}
& \bigg( 4 \tau - \frac{ v(0) }{\ell } \bigg) \ell \| \phi^{(\ell)} \|^2 \notag\\
& \quad \le   \Ls \phi^{(\el+2)} \mb K_1 \phi^{(\el+2) }\Rs +  \Ls \phi^{(\el-2)} \mb K_1 \phi^{(\el-2) }\Rs + v(0) \|\phi^{(\ell-2)}\|^2    - 2 \Ls \phi^{(\el)} \mb K_1 \phi^{(\el) }\Rs 
\end{align}
Next, we use $\mb K_1\ge 0$ to estimate the last term and apply Lemma \ref{lem:K1:bound} below to estimate the first two terms on the right-hand-side. Abbreviating $f(\ell) := \ell \| \phi^{(\ell)}\|^2$, we obtain
\begin{align}
\bigg( 4 \tau - \frac{ v(0) }{ \ell } \bigg) f(\ell) \le C \| v\|_\infty \Big(  f(\ell+2) + f(\ell-2) \Big).
\end{align}
Dividing both sides by $C \| v\|_\infty$, the pre-factor on the left side is strictly larger than two if $ \| v\|_\infty$ is sufficiently small. (Note that the spectral gap $\tau>0$ is uniform in $\| v\|_\infty \to 0$.) Thus, we arrive at 
\begin{align}\label{eq:diff:ineq:Bog}
\sigma  f(\ell) \le   f(\ell+2) + f(\ell-2) \quad 
\end{align}
for some $\sigma>2$ and all $\ell\ge 2 $. We now consider $f(\ell)$ separately for $\ell$ even/odd. The difference inequality states that the second discrete derivative of $f(\ell)$ is bounded from below by $(\sigma-2) f(\ell)$. On the one hand, this shows that $f(\ell)$ is convex, and thus has at most one minimum $f(\ell_0)$. On the other hand, the inequality implies that $f(\ell) \le (\sigma - 2 )^{\ell} f(1)$ for $1\le \ell \le \ell_0$ 
and $f(\ell) \ge (\sigma -2 )^{\ell-\ell_0} f(\ell_0)$ for $\ell \ge \ell_0$. By normalization of $\phi$, i.e. $\sum_{\ell=0}^\infty \| \phi^{(\ell)} \|^2 =1$, and since $\sigma >2$, this implies that $f(\ell)$ has no minimum. Consequently,  $\ell \| \phi^{(\ell)} \|^2 \le (\sigma - 2 )^{\ell} ( \|\phi^{(2)} \|^2 +  \|\phi^{(3)} \|^2) $ for all $\ell$, as claimed.

In the next section, we extend the above argument to the ground state $\chi = U_N(\varphi) \Psi_N$ of the excitation Hamiltonian $\mb H$ and remove the assumption that $\| v \|_\infty $ is small. This requires some technical modifications: Most importantly, in \eqref{eq:outline:ineuqlaity} we will not estimate the last term by $-\mb K_1 \le 0$. Instead, we sum both sides over $\ell - L , \ldots , \ell +L$ for some large but fixed $L$. While on the right side, many terms cancel each other, the left-hand side is effectively increased by a factor proportional to $L$. This will help us to remove the smallness assumption on $\|  v \|_\infty$. Due to the presence of $v(0)$, the Coulomb case requires another approximation argument that will be explained in Section \ref{sec:extension:singular}. An obstacle in considering the full Hamiltonian $\mb H$ compared to $\mb H_0$ is the presence of $\mb K_3$ and $\mb K_4$ (for the Coulomb potential, $\mb K_4$ is not relevant since it is non-negative). In order to treat these operators as perturbations, we restrict the derivation of the difference inequality to values $\ell \le \delta N$ for some small $\delta$. This is helpful because $\mb K_3$ and $\mb K_4$ have more than two creation and annihilation operators and additional factors of $(N-1)^{-1/2}$. Having established the exponential decay up to $\ell = \delta N$, it will follow as a simple consequence of the eigenvalue equation that $\| \chi^{(\ell)} \|$ is bounded by $\exp(- \varepsilon N )$ for all $\delta N \le \ell \le N$ and some $\varepsilon >0$.  


\subsection{Difference inequality and proof of Theorem \ref{theorem}}

\label{sec:proof}

The following lemma is the main technical ingredient for the proof of Theorem \ref{theorem}. It provides a suitable generalization of the difference inequality \eqref{eq:diff:ineq:Bog} to the ground state $\chi \in \mathcal F_\perp^{\le N}$ of $\mb H$.

\begin{lemma}\label{lem:FL:inequality} There exist constants $L\ge 1 $, $\sigma>2$ and $\kappa \in (0,1)$ such that the discrete function $F_L(\ell):=\sum_{k=\ell - L}^{\ell+L}  k \| \chi^{(k)}\|^2$ satisfies 
\begin{align}\label{eq:FL:inequality:1}
\sigma F_L(\el) \le F_L(\el +L ) + F_L(\el - L) 
\end{align}
for all $L \le \ell \le \kappa N $ and $N$ large enough.
\end{lemma}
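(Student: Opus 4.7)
The plan is to run the Bogoliubov-case argument of Section~\ref{sec:outline} with the full excitation Hamiltonian $\mb H$ in place of $\mb H_0$ and with the eigenvalue equation $\mb H \chi = (E_N - N e_{\rm H}) \chi$. Using $\varphi^{\otimes N}$ as a trial state one has $E_N - N e_{\rm H} \le 0$, so pairing with $\chi^{(k)}$ and invoking the spectral gap $\mb K_0 \ge \tau \mc N$ gives, for every $k$,
\begin{align*}
\tau k \, \| \chi^{(k)} \|^2 \, \le \, - \frac{1}{N-1} \Big\langle \chi^{(k)}, \Big( \mb K_1 \mf a(\mc N) + \mb K_2 \mf b(\mc N) + \mathrm{h.c.} + \mb K_3 \mf c(\mc N) + \mathrm{h.c.} + \mb K_4 \Big) \chi \Big\rangle.
\end{align*}
Summing this over $k \in \{\ell-L,\dots,\ell+L\}$ produces $\tau F_L(\ell)$ on the left, and the task is to bound the right-hand side in terms of $F_L(\ell \pm L)$.

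Applying Lemma~\ref{lem:K2:bound:b} to the $\mb K_2$ cross-sector terms rewrites them as a discrete second derivative in $k$ of the diagonal expectations $A(k) := \langle \chi^{(k)}, \mb K_1 \chi^{(k)}\rangle$, plus a mean-field shift of the form $v(0) \| \chi^{(k-2)} \|^2$, matching the Bogoliubov computation leading to \eqref{eq:outline:ineuqlaity}. After the window summation, the discrete Laplacian telescopes: only a handful of boundary values $A(k)$ with $k$ near $\ell \pm L$ survive, each controlled via Lemma~\ref{lem:K1:bound} by a single term of the form $k \| \chi^{(k)} \|^2$ and hence absorbed into $F_L(\ell \pm L)$. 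The residual shift contributes at most $v(0) F_L(\ell)/\ell$, which is harmless for $L \le \ell$.

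The operators $\mb K_3$ and $\mb K_4$ are to be treated perturbatively under the restriction $\ell \le \kappa N$: $\mb K_3$ carries the small prefactor $\mf c(k)/(N-1) = O(N^{-1/2})$ and $\mb K_4$ the factor $1/(N-1)$, whereas $\mf a(k)/(N-1)$ and $\mf b(k)/(N-1)$ stay bounded by~$1$. Standard relative bounds $\pm \mb K_3 \lesssim \mb K_0 + \mc N$ and $\pm \mb K_4 \lesssim (\mc N + 1)^2$, combined with $k \le \kappa N$, should render the corresponding contributions $o_{\kappa,N}(1) \cdot F_L(\ell)$. Collecting the estimates I expect an inequality
\begin{align*}
\bigl( c_1 \tau - o_{L,\kappa,N}(1) \bigr) F_L(\ell) \, \le \, c_2 \bigl( F_L(\ell+L) + F_L(\ell-L) \bigr),
\end{align*}
in which $c_1$ is enhanced by the window-length $L$ (via the incomplete cancellation in the telescoped Laplacian together with the $-2 A(k)$ diagonal contributions that were dropped in the small-$\|v\|_\infty$ Bogoliubov sketch). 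Taking $L$ large, $\kappa$ small, and $N$ large then yields $c_1 \tau / c_2 > 2$, i.e.\ $\sigma > 2$.

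The main obstacle I anticipate is quantifying this $L$-enhancement cleanly: one must verify that the window summation genuinely amplifies the coefficient of $F_L(\ell)$ on the left against the constants inherited from Lemmas~\ref{lem:K1:bound}--\ref{lem:K2:bound:b}, and that the $\mb K_3, \mb K_4$ perturbations do not undo this amplification inside $\{\ell \le \kappa N\}$. A secondary obstacle, specific to the Coulomb potential, is that $v(0) = \infty$ invalidates the mean-field shift in a direct application; the excerpt signals that this is handled by a separate approximation argument in Section~\ref{sec:extension:singular}.
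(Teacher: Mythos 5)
Your Step-1-type derivation --- pairing the eigenvalue equation with $\chi^{(k)}$, using $E_N-Ne_{\rm H}\le 0$ and $\mb K_0\ge\tau\mc N$, converting the $\mb K_2$ cross terms into a discrete Laplacian of $g(k)=\ls\chi^{(k)},\mb K_1\chi^{(k)}\rs$ via Lemma \ref{lem:K2:bound:b}, telescoping over the window, and treating $\mb K_3,\mb K_4$ perturbatively on $\ell\le\kappa N$ --- is essentially the paper's Step 1 (the paper controls $\mb K_3,\mb K_4$ with the sector-wise bounds of Lemmas \ref{lem:bound:K3} and \ref{lem:bound:K4} rather than global relative operator bounds, but that is a minor difference). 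The genuine gap is the point where you claim that the coefficient $c_1$ on the left is ``enhanced by the window-length $L$ via the incomplete cancellation in the telescoped Laplacian together with the $-2A(k)$ diagonal contributions.'' No such mechanism is available: summing the discrete Laplacian over the window makes the bulk terms cancel \emph{exactly}, and the diagonal contributions are needed merely to offset the diagonal term $\mf a(\ell)g(\ell)$ produced by $\mb K_1$ on the left (the paper uses $\mf b(\ell)-\mf a(\ell)\le 0$, $\mf b(\ell-2)-\mf a(\ell)\le C$ and $g\le Cf$ for precisely this). What the window summation actually yields is
\begin{align*}
\mu\, F_L(\el)\ \le\ f(\el+L+2)+f(\el+L+1)+f(\el-L-1)+f(\el-L-2),
\end{align*}
with $f(k)=k\|\chi^{(k)}\|^2$ and a constant $\mu$ that is \emph{independent of $L$}, of size $\sim(\tau-C\delta^{1/2})/C$ where $C$ involves $\|v^2\ast\varphi^2\|_\infty^{1/2}$ and $v(0)$. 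Bounding the four boundary values by $F_L(\el\pm L)$ only gives the lemma with $\sigma=\mu$, and since no smallness of $v$ is assumed there is no reason for $\mu>2$; enlarging $L$ enlarges $F_L(\el)$ and $F_L(\el\pm L)$ alike and does not improve the ratio, so your final inequality with $c_1$ growing in $L$ is unjustified.

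The missing idea is the paper's Step 2: apply the same summed inequality for a whole family of window half-widths $L+j$, $j=0,\dots,L-2$. The corresponding boundary quadruples all lie in the outer shells $\{\el+L+1,\dots,\el+2L\}$ and $\{\el-2L,\dots,\el-L-1\}$, which are contained in $F_L(\el+L)+F_L(\el-L)$, so that
\begin{align*}
F_L(\el+L)+F_L(\el-L)\ \ge\ \mu\big(F_L(\el)+F_{L+2}(\el)+\cdots+F_{2L-2}(\el)\big)\ \ge\ \frac{\mu}{2}\,(L-3)\,F_L(\el),
\end{align*}
using $F_{L+j}(\el)\ge F_L(\el)$; only now does choosing $L$ large enough give $\sigma>2$, uniformly in $\|v\|_\infty$. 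Without this iteration over nested windows (or an equivalent device) your argument cannot reach $\sigma>2$. Your secondary remarks are fine: the Coulomb case is indeed handled separately (Lemma \ref{lem:K2:bound}), and one also needs $\ell$ bounded below by a fixed constant to absorb the $v(0)\|\chi^{(\ell-2)}\|^2$-type shift, which your ``$L\le\ell$'' comment only partially addresses.
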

Before we embark on the proof of the lemma, we explore its consequences and derive Theorem \ref{theorem}. 
 
\begin{proof}[Proof of Theorem \ref{theorem}]
We apply \eqref{eq:FL:inequality:1} for $\ell =L , 2 L , 3 L , \ldots, n L$ with $n \le  \kappa  N/L $. In other words, we use that the second discrete derivative of the function $G(\l) := F_L( \el L)$ is bounded from below by $(\sigma-2) G(\ell)$,
\begin{align}
(\sigma -2) G(\el) \le G(\el +1 ) + G(\el -1)  - 2 G(\el)
\end{align}
for all $\ell \in \{ 1 ,\ldots, \kappa' N \} $, where $\kappa' = \kappa / L$. This implies that $G$ is convex and thus attains a unique minimum at $\l_0 \in \mb N \cup \{ \infty\}$. The Inequality further implies that $G(\el)$ is exponentially decaying for $\ell \le \ell_0$ and exponentially increasing for $\ell \ge \ell_0$,
\begin{align}
G(\l) & \le \frac{G(1)}{(\sigma-1)^{\l-1}} \qquad \qquad \forall\ 1 \le \ell \le \l_0 , \\[1mm]
G(k ) & \ge   (\sigma- 1 )^{k-\el} G(\el ) \quad \quad \, \forall\ \l_0  \le \el \le  k \le  \kappa' N ,\label{eq:G:lower:bound}
\end{align}
where the second case is relevant only if $\ell_0 \le \kappa' N $. Using the second bound, we can estimate $G(\ell)$ for $\el \le \kappa' N/2 $: In fact, choosing $\el = \kappa' N /2 $, $k =\kappa' N$ and since $G\le N $ by normalization of $\chi$, \eqref{eq:G:lower:bound} implies $G(\kappa' N/2) \le N (\sigma-1)^{-\kappa'  N/2}$ and thus $G(\ell ) \le N  (\sigma-1)^{- \kappa'  N / 2 } $ for $\ell_0  \le \el \le \kappa' N /2 $. 

From the above, we conclude that  $G(\ell) \le C \exp(-\varepsilon' \ell)$ for for some constants $C,\varepsilon' >0$ and all $1\le \ell \le \kappa' N / 2 $. Since $G(\ell)  = \sum_{k = (\ell -1) L}^{(\ell+1)L } \ell \| \chi^{(\ell)} \|^2$, we obtain
\begin{align}\label{eq:exp:decay}
\sup \big\{ \|\chi^{(k )}\|^2 : (\ell -1 ) L \le k \le (\ell+1) L  \big\} \le G(\ell) \le C \exp(-\varepsilon' \ell)
\end{align}
for all $\ell \le \kappa' N/2$, which implies $\|\chi^{(\ell)}\| \le C \exp(- \varepsilon \ell)$ for all $\ell \le \kappa' N/2$ and some $\varepsilon >0$.

It remains to prove the exponential decay for $\ell \ge \kappa' N/2$. To this end, we write the ground state as $\chi= \chi^\le + \chi^>$ with $\chi^\le = \mathds{1}(\mathcal N\le N\kappa' /2) \chi$. Then, we use the eigenvalue equation together with $\mb H- ( E_N - N e_{\rm H}) \ge 0$ and the fact that only $\mb K_2$ and $\mb K_3$ couple the two different parts of the ground state,
\begin{align}
0 & = \Ls \chi, ( \mb H - ( E_N - Ne_{\rm H}) ) \chi \Rs \notag\\
& \ge   \Ls \chi^>, ( \mb H - ( E_N - N e_{\rm H}) ) \chi^> \Rs  + 2 \text{Re} \Ls \chi^>, \mb K_2 \chi^\le \Rs  + 2 \text{Re} \Ls \chi^>, \mb K_3 \chi^\le \Rs .
\end{align}
The last two terms are bounded by
\begin{align}
\big| \Ls \chi^>, \mb K_2 \chi^\le \Rs \big| + \big| \Ls \chi^>, \mb K_3 \chi^\le \Rs \big| \le C N \big( \| \chi^{(N\kappa' /2-1)} \| +  \| \chi^{(N\kappa' /2)} \| \big) \le \exp(- c  N )
\end{align}
for some $c>0$ and large enough $N$, where we used the exponential decay of $\| \chi^{(\ell)}\|$ for $\ell \le \kappa' N /2$ in the second bound. (Note that the first bound follows from  Lemmas \ref{lem:K2:bound:b}, \ref{lem:bound:K3} and \ref{lem:K2:bound}). Thus,
\begin{align}
 \Ls \chi^>, ( \mb H - (E_N - N e_{\rm H} ) ) \chi^> \Rs  \le  \exp(- c  N ).
\end{align}
Since $\chi^> / \| \chi^> \|$ is not a low-energy state (using $\mb K_0 \ge \tau \mathcal N$, it is not difficult to see that $ \ls \chi^> , ( \mb H - (E_N -N e_{\rm H})   \chi^> \rs \ge c N \| \chi^> \|^2 $ for some $c>0$), we have $\| \chi^> \| \le C \exp(- \varepsilon N )$ for some $C,\varepsilon>0$.

To summarize, there exist constants $C,\varepsilon >0$ such that $\| \chi^{(\ell)} \| \le C \exp(- \varepsilon \ell)$ for all $\ell \in \{ 1,\ldots, N\}$ and all large $N$. This implies Theorem \ref{theorem}.
\end{proof}

\subsection{Proof of the difference inequality}

\label{sect:proof:diff:inequality}

Let us recall that we assume the pair potential $v$ to be either (i) even, pointwise bounded and with non-negative Fourier transform or (ii) $v(x) = \lambda |x|^{-1}$ with $\lambda >0$.  For better readability, we first prove Lemma \ref{lem:FL:inequality} in case (i). In Section \ref{sec:extension:singular}, we explain how the proof is adapted to cover case (ii). Note that in both cases, we have $\| v^2 \ast \varphi^2\|_\infty < \infty$, where $\varphi$ is the normalized positive Hartree minimizer. For the Coulomb potential, this follows from Hardy's inequality and $\varphi \in H^1(\mathbb R^3)$.

We first state and prove some preliminary estimates for the operators appearing in the excitation Hamiltonian. The statements of Lemmas \ref{lem:K1:bound} and \ref{lem:bound:K3} hold in both cases, whereas Lemmas \ref{lem:K2:bound:b} and \ref{lem:bound:K4} only hold in case (i).

\begin{lemma}\label{lem:K1:bound} For all $\xi \in \mathcal F_\perp$, we have
\begin{align}
 \big| \Ls \xi^{(\l)} ,  \mb K_1 \xi^{(\ell)} \Rs  \big|  & \le  \| v^2 \ast \varphi^2\|_\infty^{1/2}\, \ell \|\xi^{(\ell)}\|^2.
\end{align}
\end{lemma}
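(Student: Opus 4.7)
The plan is to recognize $\mathbf{K}_1$ as the second quantization of a one-particle operator, then reduce the $\ell$-particle matrix element to a standard operator-norm estimate.

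Observe that $\mathbf{K}_1 = \d\Gamma(qKq)$, where $K$ denotes the one-particle operator on $L^2(\mathbb{R}^3)$ with integral kernel $K(x,y) = \varphi(x) v(x-y)\varphi(y)$. On the $\ell$-particle sector of $\mathcal{F}_\perp$, $\d\Gamma(qKq)$ acts as $\sum_{i=1}^\ell (qKq)_i$, so bosonic symmetry gives
\begin{align*}
\langle \xi^{(\ell)}, \mathbf{K}_1 \xi^{(\ell)}\rangle = \ell\, \langle \xi^{(\ell)}, (qKq)_1 \xi^{(\ell)}\rangle.
\end{align*}
Since $\xi^{(\ell)}\in \bigotimes_{\mathrm{sym}}^\ell \{\varphi\}^\perp$ we have $q_1\xi^{(\ell)} = \xi^{(\ell)}$, so the outer projections may be dropped and Cauchy--Schwarz yields
\begin{align*}
|\langle \xi^{(\ell)}, \mathbf{K}_1 \xi^{(\ell)}\rangle| \le \ell\, \|K\|_{\mathrm{op}}\, \|\xi^{(\ell)}\|^2.
\end{align*}

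The remaining task is the operator-norm bound $\|K\|_{\mathrm{op}} \le \|v^2\ast \varphi^2\|_\infty^{1/2}$. Writing $(Kf)(x) = \varphi(x)\int v(x-y)\varphi(y) f(y)\, \d y$ and applying Cauchy--Schwarz in the $y$-integral with the natural split $v(x-y)\varphi(y) \cdot f(y)$ gives
\begin{align*}
|(Kf)(x)|^2 \le \varphi(x)^2\, (v^2\ast \varphi^2)(x)\, \|f\|_2^2.
\end{align*}
Integrating in $x$ and using $\|\varphi\|_2 = 1$ yields $\|Kf\|_2^2 \le \|v^2\ast \varphi^2\|_\infty\, \|f\|_2^2$, which combined with the previous step proves the claim.

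No step presents a real obstacle; the only point requiring care is choosing the correct split in Cauchy--Schwarz so that the weight $(v^2\ast \varphi^2)(x)$ emerges naturally, and observing that the assumption $\|v^2\ast \varphi^2\|_\infty < \infty$ (valid in both case (i) and case (ii), as noted at the start of Section \ref{sect:proof:diff:inequality}) is all that is used — consistent with the fact that the lemma is asserted in both cases.
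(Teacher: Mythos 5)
Your proof is correct, and it reaches the same constant as the paper with a slightly different organization. The paper works directly in second quantization: it writes $\Ls \xi^{(\l)}, \mb K_1 \xi^{(\l)}\Rs = \int \d x\, \d y\, \varphi(x) v(x-y)\varphi(y) \Ls \xi^{(\l)}, a_x^* a_y \xi^{(\l)}\Rs$ and applies Cauchy--Schwarz twice, first in $y$ (producing the weight $(v^2\ast\varphi^2)(x)^{1/2}$ and a factor $\|\mathcal N^{1/2}\xi^{(\l)}\|$ via $\int \d y\, \|a_y\xi^{(\l)}\|^2 = \|\mathcal N^{1/2}\xi^{(\l)}\|^2$) and then in $x$ against $\varphi$. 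You instead observe that $\mb K_1 = \d\Gamma(qKq)$, extract the factor $\ell$ from bosonic symmetry on the $\ell$-particle sector, and reduce everything to the one-particle operator-norm bound $\|K\|_{\textnormal{op}} \le \|v^2\ast\varphi^2\|_\infty^{1/2}$, which you prove by the same Cauchy--Schwarz split $v(x-y)\varphi(y)\cdot f(y)$; note that dropping the projections $q$ is not even needed, since $\|qKq\|_{\textnormal{op}} \le \|K\|_{\textnormal{op}}$. The analytic core (Cauchy--Schwarz generating the weight $v^2\ast\varphi^2$, valid in both case (i) and the Coulomb case (ii)) is identical; what your packaging buys is the manifestly stronger operator statement $\pm\, \mb K_1 \le \|v^2\ast\varphi^2\|_\infty^{1/2}\, \mathcal N$ on $\mathcal F_\perp$, whereas the paper's second-quantized double Cauchy--Schwarz is the template that is then reused for the off-diagonal terms $\mb K_2$ and $\mb K_3$ (Lemmas \ref{lem:K2:bound:b} and \ref{lem:bound:K3}), where couplings between different sectors make a $\d\Gamma$ shortcut unavailable.
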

\begin{proof}
We apply two times Cauchy--Schwarz and use $\mathcal N = \int dx \, a_x^* a_x$ to find
\begin{align}\label{K:1:bound:x}
 \Big| \Ls \xi^{(\l)} , \mb K_{1} \xi^{(\ell)} \Rs  \Big| &  = \Big|  \int \d x \d y  \, \varphi(x) v(x-y) \varphi(y)
\Ls \xi^{(\l)} , a_{x}^* a_{y} \xi^{(\l)} \Rs  \Big|\notag \\
&  \le \int dx\, \varphi(x) \, \big( v^2 \ast \varphi^2 (x)\big)^{1/2} \| a_x\xi^{(\ell)}\|\, \|\mathcal N^{1/2}\xi^{(\ell)}\| \notag\\[2mm]
& \le \| v^2 \ast \varphi^2\|_\infty^{1/2} \|\varphi\|_2 \|\mathcal N^{1/2}\xi^{(\ell)}\|^2.
\end{align}
\end{proof}

\begin{lemma}\label{lem:K2:bound:b} For pointwise bounded $v$ with $\hat v\ge 0$, we have
\begin{align}\label{eq:first:bound:LemmaK2}
4  \big| \Ls \xi^{(\l)} ,  \mb K_2 \xi^{(\ell-2)} \Rs  \big|  & \le  \Ls \xi^{(\l)}, \mb K_1 \xi^{(\l)} \Rs +  \Ls \xi^{(\l-2)}, \mb K_1 \xi^{(\l-2)} \Rs  + v(0) \| \xi^{(\ell -2)}\|^2  \\[1mm]
4  \big| \Ls \xi^{(\l)} ,  \mb K_2^\dagger \xi^{(\ell+2)} \Rs  \big|  & \le  \Ls \xi^{(\l)}, \mb K_1 \xi^{(\l)} \Rs + \Ls \xi^{(\l+2)}, \mb K_1 \xi^{(\l+2)} \Rs   + v(0) \| \xi^{(\ell)}\|^2  
\end{align}
for all $\xi \in \mathcal F_\perp$.
\end{lemma}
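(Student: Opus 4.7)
The approach is to exploit the hypothesis $\hat v \ge 0$ by writing $v$ via its Fourier representation, which factorizes both $\mb K_1$ and $\mb K_2$ over a common family of one-particle modes. Writing $v(x-y) = \int \d k\, \hat v(k)\, e^{ikx}\,\overline{e^{iky}}$ (with the Fourier constant absorbed into $\hat v$) and setting $g_k := q(\varphi\, e^{ik\cdot})$, I would first check the representations
\begin{align}
\mb K_1 = \int \d k\, \hat v(k)\, b_k^* b_k, \qquad \mb K_2 = \frac{1}{2}\int \d k\, \hat v(k)\, b_k^* \tilde b_k^*,
\end{align}
where $b_k := a(g_k)$ and $\tilde b_k := a(\overline{g_k})$. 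Both identities follow by substituting the Fourier formula for $v$ into the definitions of $K_1$ and $K_2$ and collecting the $q$-projections into $g_k$. This representation makes $\mb K_1 \ge 0$ manifest and, crucially, exhibits $\mb K_2$ as an integral of rank-one creation pairs weighted by the positive measure $\hat v(k)\,\d k$.

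For the first inequality I would pair $b_k^*$ with $\xi^{(\ell)}$, apply Cauchy--Schwarz pointwise in $k$ followed by Cauchy--Schwarz with respect to $\hat v(k)\,\d k$, and close with $2AB \le A^2 + B^2$ to bound
\begin{align}
4\big|\langle \xi^{(\ell)}, \mb K_2 \xi^{(\ell-2)}\rangle\big| \le \int \d k\, \hat v(k)\, \|b_k \xi^{(\ell)}\|^2 + \int \d k\, \hat v(k)\, \|\tilde b_k^* \xi^{(\ell-2)}\|^2.
\end{align}
The first summand is exactly $\langle \xi^{(\ell)}, \mb K_1 \xi^{(\ell)}\rangle$. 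For the second I would use the CCR $\tilde b_k \tilde b_k^* = \tilde b_k^* \tilde b_k + \|g_k\|^2$: the normal-ordered piece integrates to $\langle \xi^{(\ell-2)}, \mb K_1 \xi^{(\ell-2)}\rangle$, while $\|g_k\|^2 = \|q\varphi e^{ik\cdot}\|_2^2 \le \|\varphi\|_2^2 = 1$ combined with $\int \d k\, \hat v(k) = v(0)$ bounds the commutator correction by $v(0)\|\xi^{(\ell-2)}\|^2$.

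The second inequality follows by the identical three-step argument applied to $\mb K_2^\dagger = \tfrac{1}{2}\int \d k\, \hat v(k)\, \tilde b_k b_k$; after Cauchy--Schwarz one moves $\tilde b_k$ to the right of $\tilde b_k^*$ on the $\xi^{(\ell)}$ side, so the CCR correction $v(0)\|\xi^{(\ell)}\|^2$ now attaches to $\xi^{(\ell)}$ rather than to $\xi^{(\ell-2)}$. I do not foresee a serious obstacle here: the sole nontrivial ingredient is the Fourier factorization of $K$, which is available precisely because $\hat v \ge 0$ and $v$ is pointwise bounded so that $v(0) = \int \d k\, \hat v(k) < \infty$; everything else is Cauchy--Schwarz and a one-line commutator computation.
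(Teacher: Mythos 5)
Your proposal is correct and is essentially the paper's own argument: the paper likewise expands $v$ in Fourier modes using $\hat v\ge 0$, applies Cauchy--Schwarz with respect to the weight $\hat v(k)\,\d k$, identifies one square with $\ls \xi^{(\ell)},\mb K_1\xi^{(\ell)}\rs$, and normal-orders the other via $a_x a_y^*=\delta(x-y)+a_y^*a_x$ to produce $\ls\xi^{(\ell-2)},\mb K_1\xi^{(\ell-2)}\rs + v(0)\|\xi^{(\ell-2)}\|^2$, with the second bound obtained just as you describe. The only point worth making explicit is that identifying the normal-ordered piece with $\mb K_1$ uses $\overline{g_k}=g_{-k}$ together with evenness of $v$ (the paper's $K(x,y)=K(y,x)$), which holds since $q$ and $\varphi$ are real.
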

\begin{proof} We estimate
 \begin{align}
& \Big| \Ls \xi^{(\l)} , \mb K_{2} \xi^{(\ell-2)} \Rs  \Big|  = \frac{1}{2} \Big|  \int \d x \d y  
\Ls \xi^{(\l)} , K (x,y) a_{x}^* a_{y}^*  \xi^{(\l-2)} \Rs  \Big|\notag \\
& \quad = \frac{1}{2} \Big|  \int \d k\,    \hat v(k)    
\Big\langle  \int \d x\, \varphi(x)  e^{ikx} a_x \xi^{(\l)} , \int \d y\, \varphi(y) e^{iky} a_{y}^*  \xi^{(\l-2)}   \Big\rangle \Big| \notag \\
& \quad \le  \frac{ 1 }{4}   \int \d k\,    \hat v(k)    \Big(
\Big\| \int \d x\, \varphi(x)  e^{ikx} a_x \xi^{(\l)} \Big\|^2 + \Big\|  \int \d y\, \varphi(y) e^{iky} a_{y}^*  \xi^{(\l-2)}  \Big\|^2\Big) 
\end{align}
and note that  
\begin{align}
 & \int \d k\,    \hat v(k)     
\Big\| \int \d x\, \varphi(x)  e^{ikx} a_x \xi^{(\l)} \Big\|^2 = \Ls \xi^{(\l)}, \mb K_{1} \xi^{(\l)} \Rs
\end{align}
while
\begin{align}
 & \int \d k\,    \hat v(k)     
\Big\| \int \d x\, \varphi(x)  e^{ikx} a_x^* \xi^{(\l-2)} \Big\|^2 = \int dx dy \,  K(x,y) \Ls  \xi^{(\l-2)} , a_x a_y^* \xi^{(\l-2)} \Rs \notag \\
& \qquad \qquad \qquad \qquad = v(0) \int \d x |\varphi(x)|^2\,  \| \xi^{(\l-2)}\|^2 +   \Ls  \xi^{(\l-2)}, \mb K_{1} \xi^{(\l-2)} \Rs ,
\end{align}
where we used $a_x a_y^* = \delta(x-y) + a_y^* a_x$ and $K(x,y) = K(y,x)$.

The second bound of the lemma follows from $\ell \mapsto \ell+2$ and $\mb (\mb K_2^\dagger)^\dagger = \mathbb K_2$.
\end{proof}

\begin{lemma}\label{lem:bound:K3} There is a constant $C>0$ so that for all $\ell \le \delta N$, $\delta\in (0,1)$, we have
\begin{align}
\big| \Ls \xi^{(\l)} ,  \mb K_3  \xi^{(\ell-1)}  \Rs  \big| & \le  (\delta N)^{1/2} \big(C  \ell \|\xi^{(\ell)} \|^2 +   (\ell-1) \| \xi^{(\ell-1 )} \|^2  \big) \\[1mm]
\big| \Ls \xi^{(\l)} ,  \mb K_3^\dagger  \xi^{(\ell+1)} \Rs  \big| & \le (\delta N)^{1/2} \big(C  \ell \|\xi^{(\ell)} \|^2 +   (\ell+1) \| \xi^{(\ell+1 )} \|^2  \big)
\end{align}
for every $\xi \in \mathcal F_\perp$.
\end{lemma}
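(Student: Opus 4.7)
The plan is to view $\mb K_3$ as a two-creation/one-annihilation operator smeared by $q$-projected modes, so that a sharp Cauchy--Schwarz closes cleanly with the help of a standard number-operator identity.

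Introducing the smeared operators $A^\ast(q_y) := \int \d x\, q(x,y) a_x^\ast$ and $A(q_y):=\int \d x\, q(y,x) a_x$, I would first observe that the integrals over $x_1,x_2,x_3$ in the definition of $\mb K_3$ absorb into
\begin{align*}
\mb K_3 \ = \ \int \d y_1 \d y_2\, \varphi(y_1) W(y_1,y_2)\, A^\ast(q_{y_1}) A^\ast(q_{y_2}) A(q_{y_2}).
\end{align*}
Moving the two creation operators to the bra and applying Cauchy--Schwarz in the Fock-space inner product pointwise in $(y_1,y_2)$ gives
\begin{align*}
\big|\Ls \xi^{(\el)}, \mb K_3 \xi^{(\el-1)}\Rs\big| \ \le\  \int \d y_1 \d y_2\, \varphi(y_1) |W(y_1,y_2)|\, \|A(q_{y_2}) A(q_{y_1}) \xi^{(\el)}\|\, \|A(q_{y_2}) \xi^{(\el-1)}\|.
\end{align*}
A second, weighted Cauchy--Schwarz $ab \le \tfrac{\alpha}{2} a^2 + \tfrac{1}{2\alpha} b^2$ then splits this into an integral of $\|A(q_{y_2}) A(q_{y_1}) \xi^{(\el)}\|^2$ over $(y_1,y_2)$ and an integral of $\varphi(y_1)^2 W(y_1,y_2)^2\, \|A(q_{y_2}) \xi^{(\el-1)}\|^2$ over $(y_1,y_2)$.

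The two integrals are where the real work sits. For the first, I would use the identity $\int \d y\, A^\ast(q_y) A(q_y) = \d\Gamma(q)$ (a direct consequence of $q^2 = q$), together with the commutator $[\d\Gamma(q), A(q_y)] = -A(q_y)$ and the fact that $\d\Gamma(q) = \mc N$ on $\mc F_\perp$, to conclude that it equals $\el(\el-1) \|\xi^{(\el)}\|^2$. For the second, the $y_1$-integration decouples and it suffices to show $\sup_{y_2} \int \d y_1\, \varphi(y_1)^2 W(y_1,y_2)^2 < \infty$; expanding $W$ into its four constituent terms, the singular piece is controlled by $\|v^2 \ast \varphi^2\|_\infty$ (finite in case (ii) by Hardy's inequality, as noted at the start of this section, and in case (i) by assumption), and the remaining terms are easily bounded by $\|v \ast \varphi^2\|_\infty^2$ and $\|\varphi\|_2 = 1$. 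This gives a bound of the form $C(\el-1) \|\xi^{(\el-1)}\|^2$ for the second integral.

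Combining the two, choosing $\alpha$ of order $(\delta N)^{-1/2}$, and using $\el - 1 \le \delta N$ to absorb one factor of $\el-1$ in $\el(\el-1)$ into $\delta N$, I obtain the first inequality of the lemma. The second inequality follows at once from $\Ls \xi^{(\el)}, \mb K_3^\dagger \xi^{(\el+1)}\Rs = \overline{\Ls \xi^{(\el+1)}, \mb K_3 \xi^{(\el)}\Rs}$ and the first inequality applied with $\el \mapsto \el+1$ (adjusting the constant $\delta$ by a negligible additive error). The only mildly nontrivial step is recognizing the number-operator identity that makes the first integral close exactly; the Coulomb singularity in $W$ is absorbed at the level of the uniform bound on $\int \varphi^2 W^2\, \d y_1$ via Hardy's inequality, and the rest of the argument is a one-parameter scalar optimization.
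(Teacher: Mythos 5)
Your argument is correct and follows essentially the same route as the paper: a double Cauchy--Schwarz in the field variables using $\sup_{y}\int \d y_1\,\varphi(y_1)^2 W(y_1,y)^2 \le \| W^2\ast\varphi^2\|_\infty \le C$, a particle-number count, and a weighted Cauchy--Schwarz combined with $\ell\le\delta N$; the paper merely records the intermediate bound $C\,\ell^{3/2}\|\xi^{(\ell)}\|\,\|\xi^{(\ell-1)}\|$ instead of your exact quartic identity $\ell(\ell-1)\|\xi^{(\ell)}\|^2$. The only cosmetic wrinkle is that deducing the $\mb K_3^\dagger$ bound from the first inequality with $\ell\mapsto\ell+1$ puts the constant $C$ on the $(\ell+1)\|\xi^{(\ell+1)}\|^2$ term rather than on $\ell\|\xi^{(\ell)}\|^2$ as stated, which is immaterial for the application (and is avoided by running your weighted Cauchy--Schwarz directly on $\overline{\Ls \xi^{(\ell+1)},\mb K_3\,\xi^{(\ell)}\Rs}$ with the weight chosen the other way).
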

\begin{proof} Using $\| W^2 \ast \varphi^2\|_\infty \le C$, it follows again by Cauchy--Schwarz that
\begin{align}
\Big| \Ls \xi^{(\l)} ,  \mb K_3  \xi^{(\ell-1)} \Rs  \Big| &  =  \Big|   \int  \d x_2 \, W(x_1,x_2) \varphi(x_1)  \Ls \xi^{(\l)} , a_{x_1}^* a^*_{x_2} a_{x_2}  \xi^{(\ell-1)} \Rs \Big | \notag\\
& \le  C \int dx_2  \big(W^2 \ast \varphi^2(x)\big)^{1/2} \| a_{x_2} \mathcal N^{1/2}  \xi^{(\l)} \| \,  \| a_{x_2}  \xi^{(\ell-1)} \|  \notag\\[1mm]
&  \le  C \ell^{3/2} \|\xi^{(\ell-1)} \|   \,   \|\xi^{(\l)}  \|
\end{align}
and similarly, for the bound involving $\mb K_3^\dagger$.
\end{proof}

\begin{lemma} \label{lem:bound:K4} For $\| v \|_\infty \le C $, we have for $\ell \le \delta N$, $\delta\in (0,1)$,
\begin{align}
\big| \Ls \xi^{(\l)} ,  \mb K_4  \xi^{(\ell)}  \Rs  \big| & \le C \delta N  \ell \|\xi^{(\ell)} \|^2
\end{align}
for every $\xi \in \mathcal F_\perp$.
\end{lemma}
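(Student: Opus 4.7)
The plan is to imitate the structure of the proofs for $\mb K_1$ and $\mb K_3$, reducing the matrix element to a norm square that can be rewritten as an expectation of a power of the number operator. The key observation is that although the kernel $K_4$ looks complicated, the four $q$'s combine nicely once we act on the excitation vector $\xi^{(\ell)}\in \bigotimes^\ell_{\mathrm{sym}}\{\varphi\}^\perp$.

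First I would write $\mb K_4$ as a quadratic form in the (non-local) operator $a(q_{y}) := \int \d x\, q(x,y) a_x$. Since $q(x,y)$ is real and symmetric, one has
\begin{align}
\mb K_4 = \frac{1}{2}\int \d y_1 \d y_2\, W(y_1,y_2)\, a^*(q_{y_1}) a^*(q_{y_2}) a(q_{y_1}) a(q_{y_2}).
\end{align}
Next, using $a(q_y) = a_y - \varphi(y) a(\varphi)$ together with $a(\varphi)\xi^{(\ell)}=0$ (because all particles in $\xi^{(\ell)}$ are in $\{\varphi\}^\perp$), I replace $a(q_{y_i})\xi^{(\ell)}$ by $a_{y_i}\xi^{(\ell)}$. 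This yields the clean identity
\begin{align}
\Ls \xi^{(\ell)}, \mb K_4 \xi^{(\ell)}\Rs = \frac{1}{2}\int \d y_1 \d y_2\, W(y_1,y_2)\, \| a_{y_1} a_{y_2} \xi^{(\ell)}\|^2.
\end{align}

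The main estimate is then immediate. Since $v$ is bounded, so is
\begin{align}
\|W\|_\infty \le \|v\|_\infty + 2 \| v * |\varphi|^2 \|_\infty + |\ls \varphi, v*|\varphi|^2 \varphi\rs| \le C,
\end{align}
and therefore
\begin{align}
\big| \Ls \xi^{(\ell)}, \mb K_4 \xi^{(\ell)}\Rs \big| \le \frac{C}{2} \int \d y_1 \d y_2\, \| a_{y_1} a_{y_2} \xi^{(\ell)}\|^2 = \frac{C}{2}\Ls \xi^{(\ell)}, \mc N(\mc N -1)\xi^{(\ell)}\Rs \le \frac{C}{2}\, \ell^2 \|\xi^{(\ell)}\|^2,
\end{align}
using $\mc N \xi^{(\ell)} = \ell\, \xi^{(\ell)}$. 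Finally, I invoke the assumption $\ell \le \delta N$ to rewrite one factor of $\ell$ as $\ell \le \delta N$, giving
\begin{align}
\big| \Ls \xi^{(\ell)}, \mb K_4 \xi^{(\ell)}\Rs \big| \le C\, \delta N\, \ell\, \|\xi^{(\ell)}\|^2,
\end{align}
as claimed. No real obstacle arises here; the only point to be careful about is the identification $a(q_y)\xi^{(\ell)}=a_y \xi^{(\ell)}$, which is what allows the four $q$'s in $K_4$ to collapse to the simple number-operator expectation $\ls \xi^{(\ell)},\mc N(\mc N-1)\xi^{(\ell)}\rs$ and thus to avoid the more delicate Cauchy--Schwarz manipulations needed for $\mb K_3$.
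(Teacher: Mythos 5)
Your argument is correct: the rewriting of $\mb K_4$ in terms of $a^\sharp(q_y)$, the reduction $a(q_{y_1})a(q_{y_2})\xi^{(\ell)}=a_{y_1}a_{y_2}\xi^{(\ell)}$ (valid since $a_{y_2}$ preserves orthogonality to $\varphi$ in the remaining variables), the bound $\|W\|_\infty\le 4\|v\|_\infty$, and the identity $\int \d y_1\d y_2\,\|a_{y_1}a_{y_2}\xi^{(\ell)}\|^2=\ell(\ell-1)\|\xi^{(\ell)}\|^2$ together give $|\ls\xi^{(\ell)},\mb K_4\xi^{(\ell)}\rs|\le C\ell^2\|\xi^{(\ell)}\|^2\le C\delta N\ell\|\xi^{(\ell)}\|^2$. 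The paper omits this proof as straightforward, and yours is precisely the natural argument it has in mind, so nothing further is needed.
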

The proof is straightforward and thus omitted.

\begin{proof}[Proof of Lemma \ref{lem:FL:inequality}] We shall prove the lemma in two steps.\medskip

\noindent \textbf{Step 1.} We take the scalar product with the state $\chi^{(\ell)}$ on both sides of the eigenvalue equation $\mb H \chi = (E_N -Ne_{\rm H} ) \chi$.  Using $E_N - Ne_{\rm H} \le 0$, $\mathcal N \chi^{(\ell)} = \ell \chi^{(\ell)}$ and multiplying both sides by $N-1$, we obtain
\begin{align}\label{eq:bound:K4}
0\ge &  (N -1) \Ls  \chi^{(\ell)} , \mb K_0  \chi^{(\ell)} \Rs  + \mf a (\ell) \Ls \chi^{(\ell)}, \mb K_1 \chi^{(\ell)} \Rs   \notag \\[1mm]
& \quad + \mf b (\ell-2)  \Ls \chi^{(\ell)} , \mb K_2 \chi^{(\ell-2)} \Rs +  \mf b (\ell) \Ls \chi^{(\ell)} ,  \mb K_2^\dagger \chi^{(\ell+2)} \Rs \notag  \\[1mm]
& \quad + \mf c (\ell-1) \Ls \chi^{(\ell)} ,  \mb K_3 \chi^{(\ell-1)} \Rs + \mf c (\ell) \Ls \chi^{(\ell)},  \mb K_3^\dagger \chi^{(\ell+1)} \Rs +   \Ls \chi^{(\ell)} \mb K_4 \chi^{(\ell)} \Rs 
 \end{align}
and invoking $\mb K_0\ge \tau \mc N$, we find
\begin{align}\label{ineq:proof:abbr}
& (N-1) \tau \ell \| \chi^{(\ell)} \|^2 +  \mf a(\ell) \Ls \chi^{(\ell)}, \mb K_1 \chi^{(\ell)} \Rs \notag\\ 
&\qquad \qquad \qquad \qquad  \le - \mf b (\ell-2)  \Ls \chi^{(\ell)}, \mb K_2 \chi^{(\ell-2)} \Rs -  \mf b (\ell) \Ls \chi^{(\ell)},  \mb K_2^\dagger \chi^{(\ell+2)} \Rs \notag \\
&  \qquad \qquad \qquad \qquad \quad - \mf c (\ell-1) \Ls \chi^{(\ell)},  \mb K_3 \chi^{(\ell-1)} \Rs - \mf c (\ell) \Ls \chi^{(\ell)},  \mb K_3^\dagger \chi^{(\ell+1)} \Rs \notag \\
&  \qquad \qquad \qquad \qquad \quad -  \Ls \chi^{(\ell)} , \mb K_4 \chi^{(\ell)} \Rs.
\end{align}
To facilitate the reading, let us abbreviate $f(\ell) := \ell \| \chi^{(\ell)} \|^2$, $ g(\l) := \Ls \chi^{(\ell)}, \mb K_1 \chi^{(\ell)} \Rs $ and
\begin{align}
R_2(\l) & : = - \mf b (\ell-2)  \Ls \chi^{(\ell)}, \mb K_2 \chi^{(\ell-2)} \Rs -  \mf b(\ell) \Ls \chi^{(\ell)} ,  \mb K_2^\dagger \chi^{(\ell+2)} \Rs \\
R_3(\l) &  := - \mf c (\ell-1) \Ls \chi^{(\ell)},  \mb K_3 \chi^{(\ell-1)} \Rs - \mf c (\ell) \Ls \chi^{(\ell)} ,  \mb K_3^\dagger \chi^{(\ell+1)} \Rs \\
R_4(\ell) & := -  \Ls \chi^{(\ell)} , \mb K_4 \chi^{(\ell)} \Rs.
\end{align}
such that Inequality \eqref{ineq:proof:abbr} reads
\begin{align}\label{ineq:proof:abbr:2}
(N-1) \tau f(\ell)  + \mf a(\l) g(\l) \le  R_2(\l) + R_3(\l) + R_4(\l) .
\end{align}
Since the left-hand side of \eqref{ineq:proof:abbr:2} is non-negative, we can apply Lemma \ref{lem:K2:bound} to estimate
\begin{align}\label{eq:bound:R_2}
| R_2(\l) |  & \le \tfrac{1}{4} \Big( \mf b(\ell-2) g(\l) + \mf b(\l-2) g(\l-2) + \mf b(\l) g(\l) + \mf b(\l) g(\l+2) \Big)  \notag\\
& \quad + C N\ell^{-1}  f(\ell) + C \| \chi^{(\ell-2)} \|^2 
\end{align}
where we used that $\mf b(\l-2), \mf b(\l) \le  N  $ and $ v(0) \le C$. Note that for $\ell -2 \ge \delta^{-1/2}$ for some $\delta \in (0,1)$, we can bound the last term by $ \| \chi^{(\ell-2)} \|^2 \le \delta^{1/2} f(\ell-2)$. If we further restrict the values of $\ell$ to $ \ell \le \delta N $, we have by Lemma \ref{lem:bound:K3} that
\begin{align}
| R_3(\l) |  & \le   \sqrt{ N}  \Big( \big|  \Ls \chi^{(\ell)},  \mb K_3 \chi^{(\ell-1)} \Rs \big| + \big|  \Ls \chi^{(\ell)},  \mb K_3 \chi^{(\ell+1)} \Rs \big| \Big) \notag\\[1mm]
& \le C N \delta^{1/2} f(\ell) + N\delta^{1/2}  f(\ell-1) + N\delta^{1/2}  f(\ell+1)
\end{align}
where we used $\mf c(\l-1 )  \le \mf c(\l) \le \sqrt N $. Moreover, by Lemma \ref{lem:bound:K4}, $|R_4(\ell)| \le C N \delta f(\ell)$. Thus, we arrive at
\begin{align}
4 (N-1)\big(  \tau   - C \ell^{-1} - C \delta^{1/2} \big)  f(\ell) & \le  \mf b(\ell-2) g(\l) +  \mf b(\l-2)  g(\l-2)  - 2 \mf a(\l) g(\l) \notag\\[1mm]
& \hspace{-1cm} + \mf b(\l)  g(\l) +  \mf b(\l)   g(\l+2)   - 2 \mf a(\l) g(\l) \notag\\[1mm]
& \hspace{-1cm } +  CN \delta^{1/2} \big( f(\ell+1) +   f(\ell-1)  + f(\ell-2) \big) . 
\end{align}
We now choose $\ell\ge c $ large enough and $\delta$ sufficiently small so that the left-hand side is bounded from below by $2 N\tau f(\ell) $. Moreover, we write the first two lines of the right-hand side as
\begin{align}
& \mf b(\ell-2) g(\l-2) - \mf b(\ell) g(\ell) + \mf b(\ell) g(\ell+2) - \mf b(\ell-2) g(\ell) \notag\\
& \qquad \qquad \qquad  + 2 ( \mf b(\ell) - \mf a(\ell) ) g(\l) +  2 (\mf b(\l-2) -\mf a(\ell) ) g(\l) 
\end{align}
and use that
\begin{align}
\quad  \mf b(\l) - \mf a(\l)  \le 0, \quad \mf b(\l-2)  - \mf a(\l) \le C , \quad g(\l) & \le C f(\ell)  ,
\end{align}
where the last bound follows from Lemma \ref{lem:K1:bound}. Thus, we obtain the inequality
\begin{align}
N  \tau f(\ell)  & \le  \mf b(\l-2)  g(\l-2) - \mf b(\ell) g(\l)  +  \mf b(\l)  g(\l+2) - \mf b(\ell-2) g(\l)  \notag\\[1mm]
& \quad  + CN \delta^{1/2} \big(  f(\ell+1) + N f(\ell-1) + f(\ell-2) \big).
\end{align}
Now, we sum both sides over $\{ L -\l , \ldots, L+\ell\}$. On the left-hand side, this gives
\begin{align}
N \tau F_L(\l) := N \tau \sum_{k=\l  - L}^{\l + L }    k \| \chi^{(k)}  \|^2, 
\end{align}
whereas the terms on the right-hand side are bounded by
\begin{align}
\sum_{k=\l - L  }^{\l + L  } \big(  \mf b( k -2) g( k -2) - \mf b(k) g(k )  \big) &  \le   \mf b(\l  - L - 2 ) g(\l- L - 2) \notag\\
& \le C N f(\el- L - 2)
\end{align}
and
\begin{align}
\sum_{k=\l-L}^{L+\l} \big( \mf b(k)  g(k+2) - \mf b(k-2) g(k) \big)  & \le  \mf b(\l + L) g(\l + L +2 )   \le C N f(\l + L +2)
\end{align}
and
\begin{align}
& \sum_{k=\ell - L } ^{\ell + L }    N \delta^{1/2} \big(  f(\ell +1 ) +   f(\ell-1) + f(\ell-2) \big) \notag\\[0mm]
& \hspace{1cm} \le  3 N \delta^{1/2} F_L(\ell) +  N \big(   f( \ell - L - 1 )  +  f( \ell + L  + 1 ) + f(\ell -L-2)\big) ,
\end{align}
where we used $\delta^{1/2} \le 1/2 $. 

Putting everything together, we arrive at the conclusion that there is a constant $0 < \mu \le (CN)^{-1}N ( \tau - C \delta^{1/2} ) $ such that for all allowed values of $\l$, that is, for $2 + \delta^{-1/2} \le \ell \le \delta N$ for sufficiently small $\delta$ and all large $N$, we have 
\begin{align}\label{eq:FL:bound}
\mu F_L(\l) & \le  f(\l + L+2 ) + f( \ell + L  + 1 )  +  f( \ell - L - 1 ) +  f(\l -  L -2  ) .
\end{align}

\noindent \textbf{Step 2.} We proceed by estimating
\begin{align}
F_L(\el +L ) + F_L(\el - L) 
& \ge  \sum_{k = \ell + L +1 }^{\l + 2L} f(k) \,  + \, \sum_{k= \el -2 L }^{\ell - L -1  }   f(k ) \notag \\ 
& \ge \mu \Big( F_L(\ell) + F_{L+2}(\el) + F_{L+4}(\ell) + \ldots +F_{2L-2} ( \ell ) \Big)
\end{align}
where we used the definition of $F_L(\ell)$ in the first step and applied Inequality \eqref{eq:FL:bound} with $L\to L+j$ in the second step, that is, 
\begin{align}
& f( \el + L+j +2   ) + f( \el + L+ j  +1 ) \notag\\
& \hspace{2cm} + f(\el - L - j -2  ) + f (\ell - L - j - 1  )  \ge \mu F_{L+j}(\el) 
\end{align}
for $j=0,\ldots,L-2$.

Finally, we invoke $F_{L+j}(\el) \ge F_L(\el)$ to arrive at the desired inequality
\begin{align}\label{eq:FL:inequality}
F_L(\el +L ) + F_L(\el - L) \ge \frac{\mu}{2} (L-3) F_L(\el) =: \sigma F_L(\el).
\end{align}
By choosing $L$ large enough, we have $\sigma >2$, which completes the proof of Lemma \ref{lem:FL:inequality}
\end{proof}

\subsection{Extension to the repulsive Coulomb potential}
\label{sec:extension:singular}

We briefly explain how the proof of Lemma \ref{lem:FL:inequality} presented in the previous section can be adapted to cover the repulsive Coulomb potential $v(x) = \lambda |x|^{-1}$ with $\lambda>0$. Since $\| v^2 \ast \varphi^2 \|_\infty <\infty$,  Lemmas \ref{lem:K1:bound} and \ref{lem:bound:K3} still apply. Lemma \ref{lem:bound:K4}, on the other hand, is not needed, since we can use $\mb K_4 \ge 0$ in \eqref{eq:bound:K4}. (Note that we have not assumed positivity of $v$ in the bounded case). The only obstacle comes from the use of Lemma \ref{lem:K2:bound:b}, which requires $v(0) < \infty$. For the repulsive Coulomb potential, we thus replace Lemma \ref{lem:K2:bound:b} by the following statement. 
\begin{lemma}\label{lem:K2:bound} Let $v(x)=\lambda|x|^{-1}$ with $\lambda>0$. There is a constant $C>0$ such that for every $\varepsilon >0$ there exists a constant $\nu( \varepsilon ) > 0$, such that
\begin{align}\label{eq:first:bound:LemmaK2}
4  \big| \Ls \xi^{(\l)} ,  \mb K_2 \xi^{(\ell-2)} \Rs  \big|  & \le g(\ell) + g(\ell-2) + \nu(\varepsilon) \| \xi^{(\ell-2)}\|^2 + \varepsilon \big( f(\ell-2) +  f(\ell) \big) 
\end{align}
for all $\xi \in \mathcal F_\perp$, where $g(\ell) = \ls \xi^{(\ell)}, \mb K_1 \xi^{(\ell)}\rs$ and $f(\ell) = \ell \| \xi^{(\ell)} \|^2$.
\end{lemma}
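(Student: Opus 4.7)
The plan is to isolate the Coulomb singularity in momentum space. I would introduce a parameter $\Lambda > 0$ and write $\hat v = \hat v_{\le\Lambda} + \hat v_{>\Lambda}$ with $\hat v_{\le\Lambda}(k) := \hat v(k)\mathds{1}(|k|\le \Lambda)$. Since $\hat v(k) = 4\pi\lambda/|k|^2$, both parts are non-negative; the low-frequency piece $v_{\le\Lambda}$ is pointwise bounded with $v_{\le\Lambda}(0) = 2\lambda\Lambda/\pi$, while Plancherel gives $\|v_{>\Lambda}\|_2^2 = 8\lambda^2/\Lambda$. This induces the splittings $\mb K_j = \mb K_j^{\le\Lambda} + \mb K_j^{>\Lambda}$ for $j\in\{1,2\}$, with $\mb K_1^{>\Lambda}\ge 0$ on $\mathcal F_\perp$.

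For the low-frequency piece I would apply Lemma \ref{lem:K2:bound:b} verbatim to $v_{\le\Lambda}$, whose hypotheses are all met, to obtain
\begin{align*}
4\bigl|\langle \xi^{(\ell)}, \mb K_2^{\le\Lambda}\xi^{(\ell-2)}\rangle\bigr| \le g^{\le\Lambda}(\ell) + g^{\le\Lambda}(\ell-2) + v_{\le\Lambda}(0)\|\xi^{(\ell-2)}\|^2,
\end{align*}
with $g^{\le\Lambda}(\ell) := \langle \xi^{(\ell)}, \mb K_1^{\le\Lambda}\xi^{(\ell)}\rangle \le g(\ell)$, the last inequality following from $\mb K_1^{>\Lambda}\ge 0$. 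This already produces the first three terms of the target bound, with $\nu(\varepsilon)\ge v_{\le\Lambda}(0)$.

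For the high-frequency piece the normal-ordering used in Lemma \ref{lem:K2:bound:b} would reintroduce the divergent $v_{>\Lambda}(0)$, so I would instead estimate directly in position space via Cauchy--Schwarz:
\begin{align*}
\bigl|\langle \xi^{(\ell)}, \mb K_2^{>\Lambda}\xi^{(\ell-2)}\rangle\bigr| \le \tfrac{1}{2}\|K^{>\Lambda}\|_{\mathrm{HS}}\,\sqrt{\ell(\ell-1)}\,\|\xi^{(\ell)}\|\|\xi^{(\ell-2)}\|,
\end{align*}
where $K^{>\Lambda}(x,y) = \varphi(x) v_{>\Lambda}(x-y)\varphi(y)$. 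Young's convolution inequality together with $\varphi\in L^\infty$ (a standard consequence of elliptic regularity for the Hartree minimizer in the Coulomb case) gives $\|K^{>\Lambda}\|_{\mathrm{HS}}^2 \le \|v_{>\Lambda}\|_2^2\,\|\varphi\|_\infty^2 \le C/\Lambda$. A weighted AM--GM then converts $\sqrt{\ell(\ell-1)}\,\|\xi^{(\ell)}\|\|\xi^{(\ell-2)}\|$ into a constant multiple of $f(\ell)+f(\ell-2)$ for $\ell\ge 3$, while for $\ell=2$ the residual $\|\xi^{(0)}\|^2$ term is absorbed into $\nu(\varepsilon)\|\xi^{(\ell-2)}\|^2$. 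Choosing $\Lambda=\Lambda(\varepsilon)$ large enough that $C/\sqrt{\Lambda}\le\varepsilon/4$ finishes the argument, yielding $\nu(\varepsilon)=O(\varepsilon^{-2})$.

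The main obstacle, absent in the bounded case, is precisely that the normal-ordering step in the proof of Lemma \ref{lem:K2:bound:b} breaks down when $v(0)=+\infty$. The key insight enabling the momentum splitting is the quantitative $1/\sqrt{\Lambda}$ decay of $\|v_{>\Lambda}\|_2$, which stems from the $|k|^{-4}$ decay of $|\hat v|^2$; this allows the singular tail to be dominated by a Hilbert--Schmidt (``two-particle'') bound rather than a normal-ordering (``one-particle'') bound, at the price of the constant $\nu(\varepsilon)$ blowing up polynomially as $\varepsilon\to 0$.
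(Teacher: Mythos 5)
Your proposal is correct, but it follows a genuinely different route from the paper. You regularize by a sharp momentum cutoff, $\hat v = \hat v\,\mathds 1(|k|\le \Lambda) + \hat v\,\mathds 1(|k|>\Lambda)$, and control the singular tail through the Hilbert--Schmidt norm $\|K^{>\Lambda}\|_{\mathrm{HS}} \le \|v_{>\Lambda}\|_2\|\varphi\|_\infty$, which decays like $\Lambda^{-1/2}$ by Plancherel; a nice side benefit of the sharp cutoff is that both pieces keep a non-negative Fourier transform, so $g^{\le\Lambda}(\ell)\le g(\ell)$ comes for free and you get the explicit rate $\nu(\varepsilon)=O(\varepsilon^{-2})$. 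The paper instead splits in position space, $v = v_\kappa + v_\kappa^\perp$ with $v_\kappa(x)=v(x)(1-e^{-|x|/\kappa})$, so that the regular part is bounded with $v_\kappa(0)=\lambda\kappa^{-1}$ and still of positive type (Coulomb minus Yukawa), applies Lemma \ref{lem:K2:bound:b} to it, and controls both error terms ($\mb K_1-\mb K_{1,\kappa}$ and $\mb K_2-\mb K_{2,\kappa}$) by the same Cauchy--Schwarz scheme as in Lemma \ref{lem:K1:bound}, using that $\|(v_\kappa^\perp)^2\ast\varphi^2\|_\infty\to 0$ as $\kappa\to 0$ (shown via monotonicity and positivity of $\varphi$); this only requires $\varphi\in H^1$ together with Hardy's inequality, and yields an unspecified $\nu(\varepsilon)$. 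The one point you should tighten is the input $\varphi\in L^\infty$: it is indeed standard, but it is an extra regularity claim the paper never needs, and you can avoid it entirely by estimating $\|K^{>\Lambda}\|_{\mathrm{HS}}^2 \le \|v_{>\Lambda}\|_2^2\,\|\varphi\|_4^4$ (Cauchy--Schwarz plus Young), since $\varphi\in H^1(\mb R^3)\hookrightarrow L^4(\mb R^3)$. With that adjustment your argument is complete and, if anything, more quantitative than the paper's.
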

The bound for $| \ls \xi^{(\ell)} , \mb K_2^\dagger \xi^{(\ell+2)} \rs |$ is obtained by $\ell \mapsto \ell+2$ and $(\mb K_2^\dagger )^\dagger = \mb K_2$. After invoking Lemma \ref{lem:K2:bound} to bound $|R_2(\ell)|$ in \eqref{eq:bound:R_2}, the crucial point is that we can choose $\varepsilon$ as small as we want (but always fixed w.r.t. $N$), say $\varepsilon \le \delta^{1/2}$. This comes at the cost of a large factor $\nu(\varepsilon)$, which can be compensated by restricting the values of $\ell$ to $\ell \ge 2 +  \delta^{-1/2} \nu(\varepsilon)$. This way, we can bound the third term in \eqref{eq:first:bound:LemmaK2} by $\nu(\varepsilon) \| \xi^{(\ell-2)}\|^2 = \tfrac{\nu(\varepsilon)}{ \ell-2} f(\ell-2)\le \delta^{1/2} f(\ell-2)$. With this at hand, the remaining steps of the proof are completely analogous to the bounded case.

\begin{proof}[Proof of Lemma \ref{lem:K2:bound}] We write $v = v_\kappa + v^\perp_\kappa$ with $v_\kappa(x) = v(x) (1-\exp(- |x| /\kappa ))$, $\kappa>0$, and observe that $v_\kappa(0) = \lambda \kappa^{-1}$ and $\widehat {v_\kappa} \ge 0$. The non-negativity of the Fourier transform follows from the fact that the Fourier transform of the Yukawa potential $ x \mapsto v(x) \exp(- |x| /\kappa )$ is smaller than the Fourier transform of the Coulomb potential. Moreover, we have $\| (v^\perp_\kappa )^2 \ast \varphi^2 \|_\infty \to 0$ as $\kappa \to 0$, which is a consequence of $f_\kappa(x):=  (v^\perp_\kappa )^2 \ast \varphi^2 (x)$ being strictly monotone decreasing as $\kappa \to 0$, i.e., $f_\kappa(x) - f_{\eta}(x) = \lambda \int dy\, |x-y|^{-2} (e^{-2|x-y|/\kappa} - e^{-2 |x-y|/\eta}) \varphi(y)^2 >0$ for all $x\in \mb R^3$ and $\kappa > \eta$, where we used positivity $\varphi$. In analogy to the definitions in Section \ref{eq:exc:hamiltonian}, we define $K_{\kappa}(x,y) = \varphi(x) v_\kappa (x-y) \varphi(y)$ as well as $\mb K_{2,\kappa}$ and $\mb K_{1,\kappa}$. For the part involving $v_\kappa$, we can proceed as in the proof of Lemma \ref{lem:K2:bound:b}, which gives
\begin{align}
4  \big| \Ls \xi^{(\l)} ,  \mb K_{2,\kappa} \xi^{(\ell-2)} \Rs  \big|  & \le  \Ls \xi^{(\l)}, \mb K_{1,\kappa} \xi^{(\l)} \Rs +  \Ls \xi^{(\l-2)}, \mb K_{1,\kappa} \xi^{(\l-2)} \Rs  + \lambda \kappa^{-1} \| \xi^{(\ell-2)}\|^2.
\end{align}
Applying Lemnma \ref{lem:K1:bound} for $v_\kappa^\perp = v - v_\kappa$ and using $\| (v^\perp_\kappa )^2 \ast \varphi^2 \|_\infty \to 0$ as $\kappa \to 0$, we further have
\begin{align}
\big| \Ls  \xi^{(\l)}, ( \mb K_{1,\kappa} - \mb K_{1} ) \xi^{(\l)} \Rs  \big| \le  \| (v_\kappa^\perp)^2 \ast\varphi^2 \|^{1/2}_\infty \ell \| \xi^{(\ell)} \|^2 \le \delta(\kappa) \ell \| \xi^{(\ell)} \|^2
\end{align}
for some sequence $\delta(\kappa) \to 0 $ as $\kappa \to 0$. To estimate the remainder term, we apply two times Cauchy--Schwarz (similarly as in the proof of Lemma \ref{lem:K1:bound}) to obtain
\begin{align}
& \Big| \Ls \xi^{(\l)} , ( \mb K_{2} - \mb K_{2,\kappa}) \xi^{(\ell-2)} \Rs  \Big| = \bigg| \int dx dy \varphi(x) v^\perp_\kappa (x-y)   \varphi(y) \Ls \chi^{(\ell)} , a_x^* a_y^* \chi^{(\ell-2)} \Rs \bigg| \notag\\
& \hspace{2cm} \le C \| (v^\perp_\kappa)^2 \ast \varphi^2 \|_\infty^{1/2} \,  \ell \| \xi^{(\ell)}\| \, \| \xi^{(\ell-2)}\|  \le  \delta(\kappa) \ell \| \xi^{(\ell)}\| \, \| \xi^{(\ell-2)}\| .
\end{align}
Putting both estimates together implies the statement of the lemma.
\end{proof}
\vspace{3mm}

\noindent\textbf{Acknowledgements.} We thank Lea Bo\ss mann, P.T. Nam and S. Rademacher for helpful remarks. P.P. acknowledges funding by the Deutsche Forschungsgemeinschaft (DFG, German Research Foundation) - SFB/TRR 352 "Mathematics of Many-Body Quantum Systems and Their Collective Phenomena".

\end{spacing}

\end{document}